\newtheorem{proposition?}{Proposition?}
\newtheorem{theorem}{Theorem}
\newtheorem{lemma}{Lemma}
\newtheorem{definition}{Definition}
\newcommand{\hi}{\mathcal{H}} 
\newcommand{\hik}{\mathcal{K}} 
\newcommand{\lh}{\mathcal{L(H)}} 
\newcommand{\lk}{\mathcal{L(K)}} 
\newcommand{\id}{\mathbf{1}} 
  \def\mathcomposite{%
     \@ifstar
        {\def\@mathcomposite@option{%
            \baselineskip\z@skip\lineskiplimit-\maxdimen}%
         \@mathcomposite}%
        {\let\@mathcomposite@option\offinterlineskip
         \@mathcomposite}}
  \def\@mathcomposite{%
     \@ifnextchar[\@@mathcomposite{\@@mathcomposite[0]}}
  \def\@@mathcomposite[#1]#2#3#4{%
     #2{\mathchoice
        {\@mathcomposite@{#1}{#3}{#4}\displaystyle{1}}%
        {\@mathcomposite@{#1}{#3}{#4}\textstyle{1}}%
        {\@mathcomposite@{#1}{#3}{#4}%
         \scriptstyle\defaultscriptratio}%
        {\@mathcomposite@{#1}{#3}{#4}%
         \scriptscriptstyle\defaultscriptscriptratio}}}
  \def\@mathcomposite@#1#2#3#4#5{%
     \vcenter{\m@th\@mathcomposite@option
        \dimen@\f@size\p@\dimen@#1\dimen@\dimen@#5\dimen@
        \divide\dimen@ 18
        \edef\@mathcomposite@skipamount{\the\dimen@}%
        \ialign{\hfil$#4##$\hfil\cr
           #2\crcr
           \noalign{\vskip\@mathcomposite@skipamount}%
           #3\crcr}}}
\begin{document}

\title[]{
Incompatibility of quantum channels in general
probabilistic theories
}
\thanks{MY and TM contributed equally to this work.}

\author{Masataka Yamada}
\email{yamada.masataka.42z@st.kyoto-u.ac.jp}
\affiliation{
Department of Nuclear Engineering, Kyoto University, 6158540 Kyoto, Japan}
\author{Takayuki Miyadera}
\email{miyadera@mi.meijigakuin.ac.jp}
\affiliation{
Institute for Mathematical Informatics, Meiji Gakuin University, 2448539 Yokohama, Japan}

\begin{abstract}

In quantum theory, there exist sets of operations that cannot be performed simultaneously. 
These sets of operations are referred to as incompatible. While this definition of 
incompatibility extends to general probabilistic theories (GPTs), 
the dependency of the set of compatible sets on 
the definition of composite systems has not been 
thoroughly investigated. 
For quantum channels, compatibility is defined using the tensor product of Hilbert spaces, 
based on the conventional composite system. However, in GPTs, composite systems are not
 uniquely defined, and the set of states can vary from the minimal tensor to the maximal
 tensor.
In this paper, 
in addition to 
the usual 
quantum compatibility, 
we introduce min-tensor-compatibility using the minimal 
tensor on the composite 
system of effect spaces 
and investigate their relationship employing noisy 
identity channels on qubits. As a result, we found that
the set of min-tensor-compatible channel pairs is strictly broader 
than the set of quantum-compatible channel pairs. 
Furthermore, we introduce the concept of almost quantum compatible pairs of channels 
from an operational perspective. This concept corresponds 
to cases where the correlation functions in the verification of 
compatibility can be realized through a channel and 
local reinterpretation of effects. We demonstrate that 
the set of all almost quantum compatible channel pairs is strictly narrower 
than the set of all min-tensor-compatible channel pairs.
\end{abstract}


\maketitle

\section{Introdocution}

In the realm of quantum theory, there exist pairs of operations that cannot be carried out 
simultaneously, a phenomenon termed incompatibility\cite{invi}. 
One of the most prominent examples is the uncertainty principle\cite{Heisenberg, Werner, POVMuncertain, book}, 
which asserts that certain pairs of observables cannot be measured concurrently. 
This concept extends beyond observables 
to encompass quantum channels as well\cite{incompchannel,
maxcomp}; 
certain pairs of channels cannot be realized simultaneously. 
The principle is exemplified 
by the no-cloning theorem\cite{cloning1, cloning2}, which highlights the incompatibility of two identity channels.

While incompatibility serves to underscore the disparities between quantum and 
classical theories, its relevance extends to broader physical frameworks. 
Indeed, the no-cloning theorem holds in general probabilistic theories (GPTs) 
beyond classical theory\cite{cloningGPT}, a fact elucidated within the context of incompatibility. 
Investigating these generalized frameworks holds promise for uncovering 
key characteristics of quantum theory.

This paper delves into the study of incompatibility of 
channels, focusing particularly on 
the principles governing the construction of composite systems. 

In GPT, although there has been research characterizing single quantum systems 
based on physical or information-theoretical criteria\cite{
MasanesMuller,DAriano,Thirdorder}, there is not a unique solution 
for determining the composite system for a given pair of systems. 
This leads to a variety of potential composite systems beyond the conventional 
quantum framework\cite{GPTreview, 
takakura, Plavala}. In fact, natural requirements for composite systems only define 
the maximum and minimum state spaces, neither of which aligns with the true quantum 
composite system. Additionally, the maximum value related to the CHSH inequality, 
a key measure characterizing composite systems, does not differentiate between 
the maximum and quantum state spaces. Therefore, it is intriguing to identify 
any phenomena that vary among the different composite systems. 
The definition of generalized incompatibility inherently relies on how 
the composite system is constructed. This raises the question of whether 
a pair of channels deemed incompatible within the standard quantum composite system
 might show compatibility within alternative composite systems.

More specifically, we demonstrate that pairs of noisy identity channels 
can be compatible within a min-tensored composite system despite being 
incompatible within the normal quantum composite system. 
This disparity prompts the introduction of new categories of compatible channels. 
Of particular interest is the exploration of whether this gap can be bridged by 
considering compatibility in a device-independent manner.
\par
The paper is organized as follows: 
Section \ref{sec:prelim} begins with an exploration of preliminary concepts surrounding quantum incompatibility and properties of composite systems within general probabilistic theories. Within this context, we introduce the notion of min-tensor-compatibility, which is grounded in identifying composite systems that yield the smallest effect spaces.
Section \ref{sec:main} presents the primary findings. Initially, we contrast the normal quantum compatibility class with the min-tensor-compatibility class by analyzing a pair of noisy identity channels. Additionally, we introduce a distinct class of compatibility called almost quantum compatibility, motivated by operational considerations. This class is then compared with min-tensor-compatibility.
\section{Preliminaries}\label{sec:prelim}

\subsection{quantum incompatibility of channels}

A quantum system is described by a Hilbert space. 
In the following, Hilbert spaces are assumed finite-dimensional. 
The set of all linear (bounded) operators acting in a Hilbert space $\mathcal{H}$ 
 is denotedy by  $\mathcal{L} (\mathcal{H}) $. 
 Its subset 
 $\mathcal{L}_+ (\mathcal{H}) $ is the set of all positive operators in $\mathcal{L} (\mathcal{H}) $, which forms a positive cone 
 in $\mathcal{L}_s(\hi):=\{A|\ A\in \lh, A=A^*\}$, 
 the set of all self-adjoint operators. 
 The positive cone completely specifies the theory (system). 
The positive cone is used to introduce 
$\mathcal{E}(\hi):=
\mathcal{L}_+(\hi) \cap (\id - \mathcal{L}_+(\hi))=
\{E|\ E\in \mathcal{L}_+(\hi), E\leq \id\}$, the set of all 
effects. The state space $\mathcal{S}(\hi)$ is 
a subset of its dual 
as $\mathcal{S}(\hi):=
\{\rho|\ 0\leq \langle \rho, E\rangle \leq 1\mbox{ for all }
E\in \mathcal{E}(\hi), \langle \rho, \id\rangle =1\}$, 
which can be identified with the set of all 
density operators through Hilbert-Schmidt inner product
$\langle \rho, E\rangle = \mbox{tr}[\rho E]$. 
A physical state change is described by a map 
called a channel. 
A linear map $\Lambda: \lk \to \lh$ is called a channel 
if it is unit-preserving and completely positive. 
(In the following, we will mainly use the Heisenberg picture. A superscript $\ ^*$ refers to the Schrödinger picture.) 
The complete positivity means that a map  
$\Lambda \otimes id: \lk \otimes \mathcal{L}(
\mathbb{C}^d) \to \lh \otimes \mathcal{L}(\mathbb{C}^d)$ 
is positive for all $d<\infty$. 
 \par
 In quantum theory, a composite system of 
 two systems described by Hilbert spaces $\hi_1$ and $\hi_2$ 
 is described by a tensored Hilbert space $\hi_1 \otimes \hi_2$. 
The following definitions of 
compatibility and incompatibility\cite{invi} are standard. 
\begin{definition}
Suppose that we have three quantum systems 
described by Hilbert spaces $\hi_0$, $\hi_1$ and $\hi_2$. 
Channels $\Lambda_1: \mathcal{L}(\hi_1) \to \mathcal{L}(\hi_0)$ 
and $\Lambda_2: \mathcal{L}(\hi_2) \to \mathcal{L}(\hi_0)$
are {\bf quantum compatible} if 
there exists a channel $\Lambda: \mathcal{L}
(\hi_1 \otimes \hi_2) \to \mathcal{L}(\hi_0)$ 
satisfying 
$\Lambda(X\otimes \id) = \Lambda_1(X)$ and 
$\Lambda(\id \otimes Y) = \Lambda_2(Y)$ 
for all $X\in \mathcal{L}(\hi_1)$ and $Y \in \mathcal{L}(\hi_2)$. 
The channel $\Lambda$ is called a joint channel.  
We call $\Lambda_1$ and $\Lambda_2$ 
{\bf quantum incompatible}
if they are not quantum compatible.  
\end{definition}  
Note that for compatible $\Lambda_1$ and $\Lambda_2$ their joint channel 
is not unique in general. 
\subsection{min-tensor-incompatibility of channels}
In GPT, a composite system is not uniquely determined 
for a given pair of systems\cite{GPTreview, 
takakura, Plavala}. For a given pair of quantum systems 
described by $\hi_1$ and $\hi_2$, 
the state space of a composite system 
is a convex subset $\mathcal{S}$ of a set 
$\mathcal{S}(\hi_1)\otimes_{max}
\mathcal{S}(\hi_2)$ defined by  
\begin{eqnarray*}
&&\mathcal{S}(\hi_1)\otimes_{max}
\mathcal{S}(\hi_2)
\\
&&
:=\{\omega|\ 
\omega: \mathcal{L}(\hi_1) \times 
\mathcal{L}(\hi_2) \to \mathbb{C}  
\mbox{ bilinear map satisfying } 0 \leq \omega(E,F) \leq 1 \mbox{ for all } 
E\in \mathcal{E}(\hi_1), F\in \mathcal{E}(\hi_2),   
\mbox{ and }\omega(\id, \id) =1 \}.
\end{eqnarray*}
$\mathcal{S}$ satisfies 
$\mathcal{S}(\hi_1)\otimes_{min}\mathcal{S}(\hi_2)
\subset \mathcal{S}\subset \mathcal{S}(\hi_1)
\otimes_{max}\mathcal{S}(\hi_2)$, where 
\begin{eqnarray*}
&&\mathcal{S}(\hi_1)\otimes_{min}\mathcal{S}(\hi_2)
\\
&&
:=\{\omega |\ \omega =\sum_n \lambda_n 
\sigma_n \otimes \eta_n, 0 \leq \lambda_n \leq 1, 
\sum_n \lambda_n =1, \sigma_n \in \mathcal{S}(\hi_1), 
\eta_n \in \mathcal{S}(\hi_2)\}. 
\end{eqnarray*}
Note that $\mathcal{S}(\hi_1\otimes \hi_2)$ 
coincides with neither of $\mathcal{S}(\hi_1) 
\otimes_{min}\mathcal{S}(\hi_2)$
and $\mathcal{S}(\hi_1) \otimes_{max}
\mathcal{S}(\hi_2)$. Thus we need another 
condition to specify the quantum composite system. 
Positive cones which generate effect spaces 
and thus their corresponding state spaces 
are denoted by 
$\mathcal{L}_+(\hi_1)\otimes_{max}
\mathcal{L}_+(\hi_2)$ for the state space $\mathcal{S}(\hi_1)
\otimes_{min}\mathcal{S}(\hi_2)$ 
and $\mathcal{L}_+(\hi_1) \otimes_{min}
\mathcal{L}_+(\hi_2)$ for the state space $
\mathcal{S}(\hi_1) \otimes_{max}\mathcal{S}(\hi_2)$. 
For state spaces $\mathcal{S}_1\subset \mathcal{S}_2$, 
their corresponding positive cones $\mathcal{P}_1$ 
and $\mathcal{P}_2$  
satisfy $\mathcal{P}_1 \supset \mathcal{P}_2$ in general. 
In particular, the concrete form of $
\mathcal{L}_+(\hi_1)\otimes_{min} \mathcal{L}_+(\hi_2)$ 
is  
\begin{eqnarray*}
\mathcal{L}_+(\hi_1)\otimes_{min} \mathcal{L}_+(\hi_2)
=\{E\in \mathcal{L}_+(\hi_1\otimes \hi_2)|\ 
E=\sum_n \lambda_n F_n\otimes G_n, 
\lambda_n \geq 0, F_n \in \mathcal{L}_+(\hi_1), 
G_n \in \mathcal{L}_+(\hi_2)\}. 
\end{eqnarray*}
For each composite system, it is natural to define its corresponding 
compatibility. 
\begin{definition}
Suppose that we have three quantum systems 
described by Hilbert spaces $\hi_0$, $\hi_1$ and $\hi_2$. 
We consider a composite system of 
$\mathcal{L}_+(\hi_1)$ and $\mathcal{L}_+(\hi_2)$ 
defined by its positive cone $\mathcal{P}$. 
Channels $\Lambda_1: \mathcal{L}(\hi_1) \to \mathcal{L}(\hi_0)$ 
and $\Lambda_2: \mathcal{L}(\hi_2) \to \mathcal{L}(\hi_0)$
are {\bf $\mathcal{P}$-compatible} if 
there exists a linear map $\Lambda: \mathcal{L}
(\hi_1 \otimes \hi_2) \to \mathcal{L}(\hi_0)$ 
satisfying 
$\Lambda(\mathcal{P}) \subset \mathcal{L}_+(\hi_0)$ and 
$\Lambda(X\otimes \id) = \Lambda_1(X)$ and 
$\Lambda(\id \otimes Y) = \Lambda_2(Y)$ 
for all $X\in \mathcal{L}(\hi_1)$ and $Y \in \mathcal{L}(\hi_2)$.
The map $\Lambda$ is called a joint map.  
We call $\Lambda_1$ and $\Lambda_2$ 
{\bf $\mathcal{P}$-incompatible}
if they are not $\mathcal{P}$-compatible.  
In particular, we call $
\mathcal{L}_+(\hi_1)\otimes_{min}
\mathcal{L}_+(\hi_2)$-(in)compatible 
as {\bf min-tensor-(in)compatible}. 
\end{definition}
From the definition, it is trivial to see that 
for two composite systems specified by 
positive cones $\mathcal{P}_1 \subset 
\mathcal{P}_2$ a pair of $\mathcal{P}_2$-compatible channels
$(\Lambda_1, \Lambda_2)$ 
is $\mathcal{P}_1$-compatible. 
In particular, quantum compatible $\Lambda_1$ and 
$\Lambda_2$ are min-tensor-compatible. 
\section{Main results}\label{sec:main}
\subsection{quantum compatibility vs min-tensor-compatibility}
In this section, we study the min-tensor-compatibility of 
two noisy identity channels to compare the 
min-tensor-compatibility and the quantum compatibility. 
We treat a family of noisy identity channels
 $\Lambda_{\eta}: \mathcal{L}(\mathbb{C}^2) \to \mathcal{L}(\mathbb{C}^2)$ 
$(0\leq \eta \leq 1)$  
on a qubit defined by, for all $A\in 
\mathcal{L}(\mathbb{C}^2)$, 
\begin{eqnarray*}
\Lambda_{\eta}(A)= \eta A + \frac{1-\eta}{2}\mbox{tr}[A]
\id. 
\end{eqnarray*}
The channel is a probabilistic mixture of the identity channel and 
the completely depolarized channel.
The (quantum) no-cloning theorem represents 
the quantum incompatibility of $\Lambda_1$ (i.e., $\eta=1$)
 and $\Lambda_1$ itself.  
It is known that $\Lambda_{\eta_1}$ and $\Lambda_{\eta_2}$ are quantum compatible if and only if 
the parameters $\eta_1$ and $\eta_2$ satisfy\cite{optimalclone}
\begin{eqnarray}
\eta_1^2 + \eta_2^2 + (1-\eta_1 - \eta_2)^2 \leq 1. 
\label{eq:qoptimal}
\end{eqnarray}
On the other hand, the generalized no-cloning theorem\cite{cloningGPT}
is applied to show $\Lambda_1$ and $\Lambda_1$ 
(i.e., $\eta_1=\eta_2=1$) 
are min-tensor-incompatible. 
We find the following theorem for the min-tensor-compatibility. 
\begin{theorem}\label{theorem1}
 A pair of noisy identity channels $\Lambda_{\eta_1}$ and $\Lambda_{\eta_2}$ is min-tensor-compatible 
 if and only if 
 \begin{eqnarray}
 \eta_1^2 + \eta_2^2 \leq 1.
 \end{eqnarray}
 In addition for $\eta_1, \eta_2$ satisfying $\eta_1^2 + \eta_2^2 =1$, the joint map $\Lambda$ is unique.
\end{theorem}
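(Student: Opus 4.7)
The strategy is to translate min-tensor-compatibility into a single positivity inequality on $\mathbb{R}^3$ and then extract all three claims from it. Because the cone $\mathcal{L}_+(\mathbb{C}^2)\otimes_{\min}\mathcal{L}_+(\mathbb{C}^2)$ linearly spans the self-adjoint part of $\mathcal{L}(\mathbb{C}^2\otimes\mathbb{C}^2)$, any joint map $\Lambda$ may be assumed Hermitian-preserving. In the Pauli basis the marginal conditions pin $\Lambda(\id\otimes\id)=\id$, $\Lambda(\sigma_i\otimes\id)=\eta_1\sigma_i$ and $\Lambda(\id\otimes\sigma_j)=\eta_2\sigma_j$, while the only free data are $\Lambda(\sigma_i\otimes\sigma_j)=t_{ij}\id+\sum_k R^k_{ij}\sigma_k$ for $i,j\in\{1,2,3\}$ with real $t_{ij},R^k_{ij}$. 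Writing the positive inputs as $F=\tfrac12(\id+\vec a\cdot\vec\sigma)$ and $G=\tfrac12(\id+\vec b\cdot\vec\sigma)$ with $|\vec a|,|\vec b|\le 1$, a direct computation yields
\begin{equation*}
\Lambda(F\otimes G)=\tfrac14\bigl[(1+\vec a^{T}T\vec b)\id+\bigl(\eta_1\vec a+\eta_2\vec b+\vec R(\vec a,\vec b)\bigr)\cdot\vec\sigma\bigr],
\end{equation*}
where $T=(t_{ij})$ and $\vec R(\vec a,\vec b)_k=\sum_{ij}R^k_{ij}a_ib_j$. Compatibility is therefore equivalent to
\begin{equation*}
(1+\vec a^{T}T\vec b)^2\ge\bigl|\eta_1\vec a+\eta_2\vec b+\vec R(\vec a,\vec b)\bigr|^2\quad\text{for all }|\vec a|,|\vec b|\le 1.
\end{equation*}

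For necessity of $\eta_1^2+\eta_2^2\le 1$, I sum the inequality over the four sign flips $(\epsilon_1\vec a,\epsilon_2\vec b)$, $\epsilon_i\in\{\pm 1\}$: bilinearity makes $\vec a^{T}T\vec b$ and $\vec R(\vec a,\vec b)$ pick up the factor $\epsilon_1\epsilon_2$ while the marginal cross terms carry only a single sign and cancel, leaving $1+(\vec a^{T}T\vec b)^2\ge\eta_1^2|\vec a|^2+\eta_2^2|\vec b|^2+|\vec R(\vec a,\vec b)|^2$. Choosing a unit $\vec a$ with $T^{T}\vec a\ne 0$ together with a unit $\vec b\in(T^{T}\vec a)^\perp\subset\mathbb{R}^3$ (the complement is at least two-dimensional) annihilates $\vec a^{T}T\vec b$ and gives $\eta_1^2+\eta_2^2\le 1$; the case $T=0$ is immediate. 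For sufficiency I exhibit the covariant joint map with $T=\eta_1\eta_2 I_3$ and $\vec R\equiv 0$, for which the inequality collapses to $1-\eta_1^2|\vec a|^2-\eta_2^2|\vec b|^2+\eta_1^2\eta_2^2(\vec a\cdot\vec b)^2\ge 0$, clear under the hypothesis (and $1+\eta_1\eta_2\vec a\cdot\vec b\ge 1-\eta_1\eta_2\ge 0$ takes care of the trace-positivity).

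On the boundary $\eta_1^2+\eta_2^2=1$ I extract uniqueness from finer partial averages. Pairing only the two sign flips with $\epsilon_1\epsilon_2=\pm 1$ and applying the parallelogram law gives $(1\pm\vec a^{T}T\vec b)^2\ge|\eta_1\vec a\pm\eta_2\vec b|^2+|\vec R(\vec a,\vec b)|^2$. For orthonormal $\vec a,\vec b$ both right-hand norms equal $1$, so $(1-|\vec a^{T}T\vec b|)^2\ge 1+|\vec R(\vec a,\vec b)|^2\ge 1$; the branch $|\vec a^{T}T\vec b|\ge 2$ is ruled out by the scalar-positivity $1+\vec a^{T}T\vec b\ge 0$ applied to both $(\vec a,\vec b)$ and $(-\vec a,\vec b)$, so $\vec a^{T}T\vec b=0$ and $\vec R(\vec a,\vec b)=\vec 0$ on every orthonormal pair. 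Evaluating on $(\vec e_i,\vec e_j)$ kills the off-diagonals of $T$ and of $(i,j)\mapsto\vec R(\vec e_i,\vec e_j)$, while evaluating on the $45^\circ$-rotated pairs $\bigl((\vec e_i+\vec e_j)/\sqrt 2,(\vec e_i-\vec e_j)/\sqrt 2\bigr)$ equates the diagonals, yielding $T=\tau I_3$ and $\vec R(\vec a,\vec b)=(\vec a\cdot\vec b)\vec r$ for some scalar $\tau$ and vector $\vec r$.

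The main obstacle is the last step: substituting this ansatz into the original positivity inequality for unit $\vec a,\vec b$ with $\vec a\cdot\vec b=\cos\theta$ organizes the leftover as
\begin{equation*}
\cos\theta\cdot\bigl[2(\tau-\eta_1\eta_2-(\eta_1\vec a+\eta_2\vec b)\cdot\vec r)+\cos\theta(\tau^2-|\vec r|^2)\bigr]\ge 0,
\end{equation*}
and the one-sided limits $\cos\theta\to 0^{\pm}$ force $(\eta_1\vec a+\eta_2\vec b)\cdot\vec r=\tau-\eta_1\eta_2$ for every orthonormal pair $(\vec a,\vec b)$. Since the set $\{\eta_1\vec a+\eta_2\vec b:(\vec a,\vec b)\text{ orthonormal}\}$ coincides with the entire unit sphere of $\mathbb{R}^3$ at the boundary, this pins $\vec r=\vec 0$ and $\tau=\eta_1\eta_2$, completing the uniqueness of $\Lambda$.
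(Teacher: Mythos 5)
Your proof is correct, and while it shares the paper's basic skeleton (parametrize $\Lambda$ in the Pauli basis and reduce min-tensor positivity to scalar inequalities over products of qubit projections), it differs in essentially every technical step, generally to its advantage. For necessity the paper invokes Busch's joint-measurability criterion for unbiased qubit observables as a black box, whereas your averaging over the four sign flips $(\epsilon_1\vec a,\epsilon_2\vec b)$ derives $1+(\vec a^{T}T\vec b)^2\ge\eta_1^2|\vec a|^2+\eta_2^2|\vec b|^2+|\vec R(\vec a,\vec b)|^2$ from scratch, which is self-contained and in effect reproves the needed part of that criterion. For sufficiency the paper constructs a joint map only on the boundary $\eta_1^2+\eta_2^2=1$ and reaches the interior by post-composing with a noise channel, while your single covariant map $T=\eta_1\eta_2 I_3$, $\vec R\equiv 0$ covers the whole region at once. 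Your uniqueness argument (parallelogram law on the $\epsilon_1\epsilon_2=\pm1$ pairs, then one-sided limits in $\cos\theta$) replaces the paper's sign-by-sign case analysis and is sound: the continuity needed for the limits is automatic since everything is polynomial in $(\vec a,\vec b)$, and your observation that $\{\eta_1\vec a+\eta_2\vec b\,:\,(\vec a,\vec b)\ \text{orthonormal}\}$ sweeps the entire unit sphere when $\eta_1^2+\eta_2^2=1$ closes the argument.

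One substantive discrepancy deserves emphasis: your unique boundary joint map has trace coefficient $1+\eta_1\eta_2\,\vec a\cdot\vec b$, whereas the paper's final formula has $1+2\eta_1\eta_2\,\vec a\cdot\vec b$ (i.e.\ $N=2\eta_1\eta_2\id$). Your value is the correct one. The paper normalizes $\mathbf r$ by dividing $\eta_1\mathbf n+\eta_2\mathbf m$ by $1+2\eta_1\eta_2(\mathbf n\cdot\mathbf m)$ rather than by its actual norm $\sqrt{1+2\eta_1\eta_2(\mathbf n\cdot\mathbf m)}$, and this propagates a spurious factor of $2$ into $a_0$. Indeed the paper's map fails positivity on the min tensor cone: for $\eta_1=\eta_2=1/\sqrt2$ and $\mathbf n\cdot\mathbf m=-1/2$ it outputs $\tfrac14\bigl(\tfrac12\id+\vec v\cdot\boldsymbol\sigma\bigr)$ with $|\vec v|=1/\sqrt2>1/2$, whereas your map gives trace coefficient $1+\eta_1\eta_2\,\vec a\cdot\vec b=3/4$ and satisfies $(1+\eta_1\eta_2\,\vec a\cdot\vec b)^2-|\eta_1\vec a+\eta_2\vec b|^2=\eta_1^2\eta_2^2(\vec a\cdot\vec b)^2\ge0$ identically on the boundary. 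So your proposal not only establishes the theorem but also corrects the explicit form of the unique joint map.
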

\begin{proof}
 Let us assume that $\Lambda_{\eta_1} $ 
 and $\Lambda_{\eta_2}$ are 
 min-tensor-compatible and $\Lambda$ is a joint map. 
 For arbitrary normal vectors $\mathbf{n}$ and $\mathbf{m}$ in $\mathbb{R}^3$, 
 the joint map defines a POVM 
 $\{\Lambda(\frac{1}{2}(\id \pm \mathbf{n} \cdot \boldsymbol\sigma )
 \otimes \frac{1}{2}(\id \pm \mathbf{m} \cdot \boldsymbol\sigma))\}$. 
 The POVM is a joint POVM of 
 \begin{eqnarray*}
 \{ \Lambda((\id \pm \mathbf{n}\cdot \boldsymbol\sigma)/2 \otimes \id)\}=\{\Lambda_{\eta_1}
 ((\id \pm \mathbf{n}\cdot \boldsymbol\sigma)/2) 
 \}
 =\{ (\id \pm \eta_1 \mathbf{n}\cdot \boldsymbol\sigma)/2\}
 \end{eqnarray*}
 and
 \begin{eqnarray*}
 \{\Lambda( \id \otimes (\id \pm \mathbf{m}\cdot \boldsymbol\sigma)/2)\}
 =\{\Lambda_{\eta_2}( (\id \pm \mathbf{m}\cdot \boldsymbol\sigma)/2)\}
 =
 \{(\id \pm \eta_2 \mathbf{m}\cdot \boldsymbol\sigma)/2\}
 \end{eqnarray*}
 These unbiased qubit observables are jointly measurable if and only if 
 $|\eta_1 \mathbf{n}|^2 + |\eta_2 \mathbf{m}|^2 
 \leq 1 + (\eta_1 \eta_2 (\mathbf{n}\cdot \mathbf{m}))^2$ holds\cite{spinuncertain}.
 Taking orthogonal $\mathbf{n}$ and $\mathbf{m}$, we conclude 
 \begin{eqnarray*}
 \eta_1^2 + \eta_2^2 \leq 1. 
 \end{eqnarray*} 
Now we show that $\Lambda_{\eta_1}$ and $\Lambda_{\eta_2}$ 
are min-tensor-compatible if $\eta_1$ and $\eta_2$ satisfy the above inequality. 
Let us first consider the case $\eta_1^2 + \eta_2^2 =1$. 
We construct a joint map of $\Lambda_{\eta_1}$ and $\Lambda_{\eta_2}$, 
which is also shown to be the unique joint map. 
We set an orthogonal basis $\{\mathbf{e}_1, \mathbf{e}_2, \mathbf{e}_3\}$ of $\mathbb{R}^3$. 
A set $\{\xi_0, \xi_1, \xi_2, \xi_3, \xi_4\}:=
\{\id, \mathbf{e}_1\cdot \boldsymbol\sigma, \mathbf{e}_2 \cdot \boldsymbol\sigma, \mathbf{e}_3 \cdot \boldsymbol\sigma\}$ 
forms a basis of $\mathcal{L}(\mathbb{C}^2)$. 
A linear map 
$\Lambda: \mathcal{L}(\mathbb{C}^2) \otimes\mathcal{L}(\mathbb{C}^2) \to \mathcal{L}(\mathbb{C}^2)$ is completely specified by coefficients $\{\Lambda^{\xi}_{\mu \nu}\}$ defined by 
\begin{eqnarray*}
\Lambda( \xi_{\mu} \otimes \xi_{\nu}) = \sum_{\alpha=0}^3 
\Lambda^{\alpha}_{\mu \nu}\xi_{\alpha}. 
\end{eqnarray*}
 For the map $\Lambda$ to be a joint map of $\Lambda_{\eta_1}$ and 
 $\Lambda_{\eta_2}$, the coefficents must satisfy for $i, j =1,2,3$, 
 \begin{eqnarray*}
 \Lambda_{i 0}^{\alpha} = \delta_{i \alpha} \eta_1
 \\
 \Lambda_{0 j}^{\alpha} = \delta_{j \alpha} \eta_2
 \end{eqnarray*}
 and $\Lambda_{00}^{\alpha}= \delta_{0 \alpha}$. 
 For $\Lambda$ to be a well-defined map from $\mathcal{L}_+(\mathbb{C}^2)
 \otimes_{min}\mathcal{L}_+(\mathbb{C}^2)$
 to $\mathcal{L}_+(\mathbb{C}^2)$, for all $i,j=1,2,3$ the 
 following inequality must hold:  
 \begin{eqnarray*}
 \Lambda( (\xi_0 \pm \xi_i)\otimes (\xi_0 \pm \xi_j))\geq 0. 
 \end{eqnarray*}
 That is, 
 \begin{eqnarray*}
 \id + \eta_1 \xi_i + \eta_2 \xi_j + \sum_{\alpha} \Lambda^{\alpha}_{ij} \xi_\alpha \geq 0
 \\
  \id - \eta_1 \xi_i + \eta_2 \xi_j - \sum_{\alpha} \Lambda^{\alpha}_{ij} \xi_\alpha \geq 0
\\
 \id + \eta_1 \xi_i - \eta_2 \xi_j - \sum_{\alpha} \Lambda^{\alpha}_{ij} \xi_\alpha \geq 0
 \\
  \id - \eta_1 \xi_i - \eta_2 \xi_j + \sum_{\alpha} \Lambda^{\alpha}_{ij} \xi_\alpha \geq 0. 
 \end{eqnarray*}
 For $i=1$ and $j=2$, they give 
 \begin{eqnarray*}
 1+\Lambda^0_{12} \geq \sqrt{ (\eta_1 + \Lambda^1_{12})^2 + 
 (\eta_2 + \Lambda^2_{12})^2 +(\Lambda^3_{12})^2}=\sqrt{2(\Lambda^1_{12} \eta_1 + \Lambda^2_{12}\eta_2) 
 +1 + (\Lambda^1_{12})^2 + (\Lambda^2_{12})^2 + (\Lambda^3_{12})^2}\\
   1- \Lambda^0_{12} \geq \sqrt{ (\eta_1 + \Lambda^1_{12})^2 + 
 (\eta_2 - \Lambda^2_{12})^2 +(\Lambda^3_{12})^2}=\sqrt{2(\Lambda^1_{12} \eta_1 - \Lambda^2_{12}\eta_2) 
 +1 + (\Lambda^1_{12})^2 + (\Lambda^2_{12})^2 + (\Lambda^3_{12})^2}\\
    1- \Lambda^0_{12} \geq \sqrt{ (\eta_1 - \Lambda^1_{12})^2 + 
 (\eta_2 + \Lambda^2_{12})^2 +(\Lambda^3_{12})^2}=\sqrt{-2(\Lambda^1_{12} \eta_1 - \Lambda^2_{12}\eta_2) 
 +1 + (\Lambda^1_{12})^2 + (\Lambda^2_{12})^2 + (\Lambda^3_{12})^2}\\
 1+\Lambda^0_{12} \geq \sqrt{ (\eta_1 -\Lambda^1_{12})^2 + 
 (\eta_2 - \Lambda^2_{12})^2 +(\Lambda^3_{12})^2}=\sqrt{-2(\Lambda^1_{12} \eta_1 + \Lambda^2_{12}\eta_2) 
 +1 + (\Lambda^1_{12})^2 + (\Lambda^2_{12})^2 + (\Lambda^3_{12})^2}. 
 \end{eqnarray*}
 Let us first assume $\Lambda^0_{12}>0$. This contradicts with the second inequality in the case of 
 $\Lambda^1_{12}\eta_1 - \Lambda^2_{12}\eta_2 \geq 0$ and 
 with the third one in the case of $\Lambda^1_{12}\eta_1 -\Lambda^2_{12}\eta_2 <0$. 
 Therefore $\Lambda^0_{12}>0$ cannot be true. On the other hand, $\Lambda^0_{12}<0$ 
 contradicts the first and the fourth inequalities in a similar way. Thus we conclude 
 $\Lambda^0_{12}=0$ and $\Lambda^1_{12}=\Lambda^2_{12}=\Lambda^3_{12}=0$ for the above 
 inequalities to hold. 
 For any normalized
$\mathbf{e}$ and $\mathbf{d}$
satisfying $\mathbf{e}\cdot \mathbf{d}=0$, 
we can set an orthonormalized basis 
$\{\mathbf{e}_1, \mathbf{e}_2, \mathbf{e}_3\}$ 
so that  
$\mathbf{e}_1=\mathbf{e}$ and $\mathbf{e}_2 = \mathbf{d}$. 
 Thus we conclude 
for any $\mathbf{e}\cdot \mathbf{d}=0$, 
\begin{eqnarray*}
\Lambda(\mathbf{e}\cdot \boldsymbol\sigma \otimes \mathbf{d}\cdot \boldsymbol\sigma)=0. 
\end{eqnarray*}
Let us examine $\Lambda(\mathbf{n}\cdot \boldsymbol\sigma\otimes \mathbf{m}\cdot \boldsymbol\sigma)$ 
for arbitrary normalized vectors $\mathbf{n}$ and $\mathbf{m}$. 
As $\mathbf{m}$ is written as $\mathbf{m} = (\mathbf{n}\cdot \mathbf{m}) \mathbf{n} + 
(\mathbf{m} - (\mathbf{n}\cdot \mathbf{m}) \mathbf{n})$ with $(\mathbf{m} - (\mathbf{n}\cdot 
\mathbf{m})\mathbf{n}) \cdot \mathbf{n}=0$,  we obtain  
\begin{eqnarray*}
\Lambda(\mathbf{n} \cdot \boldsymbol\sigma \otimes \mathbf{m} \cdot \boldsymbol\sigma)
= (\mathbf{n}\cdot \mathbf{m}) \Lambda(\mathbf{n}\cdot \boldsymbol\sigma 
\otimes \mathbf{n} \cdot \boldsymbol\sigma). 
\end{eqnarray*}
Similarly it holds that $\Lambda( \mathbf{n} \cdot \boldsymbol\sigma 
\otimes \mathbf{m} \cdot \boldsymbol\sigma)= 
(\mathbf{n} \cdot \mathbf{m}) \Lambda(\mathbf{m} \cdot 
\sigma \otimes \mathbf{m} \cdot \boldsymbol\sigma)$. 
Therefore we find for any $\mathbf{n} \cdot \mathbf{m} \neq 0$, 
\begin{eqnarray*}
\Lambda(\mathbf{n} \cdot \boldsymbol\sigma \otimes \mathbf{n} \cdot 
\boldsymbol\sigma) = \Lambda (\mathbf{m} \cdot \boldsymbol\sigma 
\otimes \mathbf{m} \cdot \boldsymbol\sigma). 
\end{eqnarray*}
Thus $N:= \Lambda(\mathbf{n} \cdot \boldsymbol\sigma \otimes 
\mathbf{n} \cdot \boldsymbol\sigma)$ is independent of $\mathbf{n}$ and is written as 
\begin{eqnarray*}
N=a_0 \id + \mathbf{a} \cdot \boldsymbol\sigma 
\end{eqnarray*}
with $(a_0, \mathbf{a})\in \mathbb{R}^4$. Now for $\Lambda$ to be well-defined, 
it must hold that for arbitrary $\mathbf{n}, \mathbf{m}$, 
\begin{eqnarray*}
\Lambda\left( \frac{1}{2}(\id + \mathbf{n}\cdot \boldsymbol\sigma)
\otimes \frac{1}{2}(\id + \mathbf{m}\cdot \boldsymbol\sigma)\right)
= \frac{1}{4}
\left(
\id + (\eta_1 \mathbf{n} + \eta_2 \mathbf{m})\cdot \boldsymbol\sigma 
+ (\mathbf{n}\cdot \mathbf{m}) 
(a_0 \id + \mathbf{a}\cdot \boldsymbol\sigma)
\right) \geq 0.
\end{eqnarray*}
Let us assume $\mathbf{n} \cdot \mathbf{m} \neq \pm 1$. 
We introduce a normalized vector 
$\mathbf{r}:= \frac{\eta_1 \mathbf{n} + \eta_2 \mathbf{m}}
{\sqrt{1 + 2 (\mathbf{n} \cdot \mathbf{m}) \eta_1 \eta_2}}$ to rewrite the above quantity as 
\begin{eqnarray}
&&\frac{1}{4}
\left(
\id + (\eta_1 \mathbf{n} + \eta_2 \mathbf{m})\cdot \boldsymbol\sigma 
+ (\mathbf{n}\cdot \mathbf{m}) 
(a_0 \id + \mathbf{a}\cdot \boldsymbol\sigma)
\right)
\nonumber \\
&=& 
\frac{1}{4}
\left(
\id + \sqrt{
1+ 2\eta_1 \eta_2 (\mathbf{n}\cdot \mathbf{m})}
\mathbf{r}\cdot \boldsymbol\sigma 
+ (\mathbf{n}\cdot \mathbf{m}) 
(a_0 \id + \mathbf{a}\cdot \boldsymbol\sigma)
\right)
\nonumber 
\\
&=&
\frac{1}{4}
\left(
(1+ \mathbf{n}\cdot \mathbf{m} a_0)\id 
+
\left(
\sqrt{1+ 
2 (\mathbf{n}\cdot \mathbf{m}) \eta_1 \eta_2 } 
\mathbf{r}
+ (\mathbf{n}\cdot \mathbf{m})\mathbf{a}
\right) \cdot \boldsymbol\sigma
\right)\geq 0. 
\label{IneqFund}
\end{eqnarray}
Let us set $\mathbf{n}$ and $\mathbf{m}$ so that 
$\mathbf{n}\cdot \mathbf{a}=\mathbf{m}\cdot \mathbf{a}
=0$. 
The condition gives 
$\mathbf{r} \cdot \mathbf{a}=0$. 
We obtain 
\begin{eqnarray*} 
1+ \mathbf{n} \cdot \mathbf{m} a_0 
\geq \sqrt{ 1+ 2(\mathbf{n}\cdot \mathbf{m})\eta_1 \eta_2 
+ (\mathbf{n}\cdot \mathbf{m})^2 |\mathbf{a}|^2}. 
\end{eqnarray*}
Therefore it must hold that
\begin{eqnarray}
1+ 2 \mathbf{n} \cdot \mathbf{m} a_0 
+(\mathbf{n}\cdot \mathbf{m})^2 a_0^2 
 \geq  1+ 2 \mathbf{n} \cdot \mathbf{m} \eta_1 \eta_2
 +(\mathbf{n}\cdot \mathbf{m})^2 |\mathbf{a}|^2
 \geq 
 1+ 2 \mathbf{n} \cdot \mathbf{m} \eta_1 \eta_2. 
\label{ineqa0}
\end{eqnarray}
As the vectors $\mathbf{n}$ and $\mathbf{m}$ 
can be chosen arbitrarily as far as $\mathbf{n}\cdot 
\mathbf{a}
= \mathbf{m}\cdot \mathbf{a}=0$, 
the value $\mathbf{n}\cdot \mathbf{m}$ 
can take any value in $(-1, 1)$.
\par
For $\mathbf{n}\cdot \mathbf{m}>0$, 
the above inequality (\ref{ineqa0}) gives  
$
2a_0 +(\mathbf{n}\cdot \mathbf{m})a_0^2 
\geq 2 \eta_1 \eta_2. 
$
Thus we obtain $a_0 \geq \eta_1 \eta_2$. 
On the other hand, for $\mathbf{n}\cdot \mathbf{m}<0$, 
(\ref{ineqa0}) gives 
$
2a_0 +(\mathbf{n}\cdot \mathbf{m})a_0^2 
\leq 2 \eta_1 \eta_2,  
$
and $a_0 \leq \eta_1 \eta_2$. 
Therefore we conlude 
\begin{eqnarray*}
a_0 = \eta_1 \eta_2. 
\end{eqnarray*}
Next, let us set $\mathbf{r}$ to be parallel with $\mathbf{a}$ so that we can write  
$\mathbf{a}$ as $\mathbf{a}=|\mathbf{a}| \mathbf{r}$. 
Then (\ref{IneqFund}) shows  
\begin{eqnarray*}
1+
2\eta_1\eta_2 (\mathbf{n}\cdot \mathbf{m})
+ (\eta_1 \eta_2)^2 
(\mathbf{n} \cdot \mathbf{m})^2 
\geq 
1+ 2(\mathbf{n}\cdot \mathbf{m}) \eta_1 \eta_2
+ (\mathbf{n} \cdot \mathbf{m})^2 
|\mathbf{a}|^2 
+ 2(\mathbf{n}\cdot \mathbf{m})
|\mathbf{a}|
\sqrt{1+ 2(\mathbf{n}\cdot \mathbf{m}) \eta_1 \eta_2}. 
\end{eqnarray*}
It gives for $\mathbf{n}\cdot \mathbf{m}\neq 0$,  
\begin{eqnarray}
(\eta_1 \eta_2)^2 \geq 
|\mathbf{a}|^2 
+ 2\frac{|\mathbf{a}|}{(\mathbf{n}\cdot 
\mathbf{m})}
\sqrt{1 + 2(\mathbf{n}\cdot \mathbf{m})\eta_1 \eta_2}.
\label{nmineq}
\end{eqnarray}
We examine $\mathbf{n}\cdot \mathbf{m}$. 
For $\mathbf{a}=\mathbf{0}$, 
the condition $\mathbf{a}=|\mathbf{a}|\mathbf{r}$
does not give any restriction on $\mathbf{n}\cdot \mathbf{m}$. 
For $\mathbf{a} \neq \mathbf{0}$, 
let us  
write $\mathbf{a}$ 
as $\mathbf{a}= |\mathbf{a}|\mathbf{e}$ 
with a normalized vector $\mathbf{e}$. 
We introduce a normalized vector $\mathbf{e}^{\perp}$
satisfying $\mathbf{e}\cdot \mathbf{e}^{\perp}=0$. 
Let us assume $\eta_1 \leq \eta_2$. 
(The case $\eta_1 \geq \eta_2$ can be 
treated similarly.) 
We set $\mathbf{n}
= \cos \theta_1 
\mathbf{e}+ \sin \theta_1 \mathbf{e}^{\perp}$. 
Then choosing $\mathbf{m}$ 
so that 
$\mathbf{m}= \cos \theta_2 \mathbf{e}
- \sin \theta_2 \mathbf{e}^{\perp}$ 
satisfies 
\begin{eqnarray}
\sin \theta_2 = \frac{\eta_1}{\eta_2}\sin \theta_1,
\label{sintheta}
\end{eqnarray} 
we obtain 
$\eta_1\mathbf{n}+ \eta_2 \mathbf{m}
= ( \eta_1 \cos \theta_1+\eta_2 \cos \theta_2)\mathbf{e}$.   
For $\theta_1 =0$, $\theta_2=0$ satisfies 
(\ref{sintheta}).
 We increase $\theta_1$ from $0$ to 
$\pi/2$ to obtain its corresponding $\theta_2$ 
monotoneously increasing from $0$ to $
\mbox{Arcsin} \left( \frac{\eta_1}{\eta_2}\right)$
and monotoneuously decreasing $\mathbf{n}\cdot \mathbf{m}
= \cos (\theta_1 + \theta_2)$ from $1$ to 
$\cos\left(\frac{\pi}{2}+ \mbox{Arcsin} \left(
\frac{\eta_1}{\eta_2}\right)\right) \leq 0$. 
Thus we find that 
$\mathbf{n}\cdot \mathbf{m}$ can take an 
arbitrarily small positive value. 
This concludes that (\ref{nmineq}) holds 
only if $|\mathbf{a}|=0$. 
Thus we find $N=\eta_1 \eta_2 \id$ and 
\begin{eqnarray*}
\Lambda\left( \frac{1}{2}(\id + \mathbf{n}\cdot \boldsymbol\sigma)
\otimes \frac{1}{2}(\id + \mathbf{m}\cdot \boldsymbol\sigma)\right)
= \frac{1}{4}
\left((1+  \eta_1 \eta_2 (\mathbf{n} \cdot \mathbf{m}))
\id + (\eta_1 \mathbf{n} + \eta_2 \mathbf{m})\cdot \boldsymbol\sigma 
\right).
\end{eqnarray*} 
This $\Lambda$ 
is a map from $\mathcal{L}_+(\mathbb{C}^2)\otimes_{min}
\mathcal{L}_+(\mathbb{C}^2)$ to $\mathcal{L}_+(\mathbb{C}^2)$. 
Thus we proved that $\Lambda_{\eta_1}$ and $\Lambda_{\eta_2}$ 
are min-tensor-compatible for $\eta_1^2 + \eta_2^2=1$ and their joint map is 
uniquely determined. 
\par
For $\eta_1^2+ \eta_2^2 <1$, 
we first consider $\Lambda_{\eta_1}$ and 
$\Lambda_{\sqrt{1-\eta_1^2}}$ whose joint map is 
denoted by $\Lambda$. 
As $\eta_2 < \sqrt{1-\eta_1^2}$ holds, 
$\Lambda_{\eta_2}$ can be written as 
$\Lambda_{\eta_2} = \Lambda_{\sqrt{1-\eta_1^2}}\circ 
\mathcal{E}$ with some channel $\mathcal{E}$. 
In fact, in general for $\eta'_2 < \eta'_1$, 
\begin{eqnarray*}
\Lambda_{\eta'_2}=\Lambda_{\eta'_1} \circ \Lambda_{\eta'_2/\eta'_1}
\end{eqnarray*}
holds. 
The map $\Lambda\circ (id \otimes \mathcal{E})$ 
is a joint map of $\Lambda_{\eta_1}$ and $\Lambda_{\eta_2}$\cite{
Plavala}. 
Thus $\Lambda_{\eta_1}$ and $\Lambda_{\eta_2}$ 
are min-tensor-compatible. 
\end{proof}
The above theorem shows that the set of 
all min-tensor-compatible pairs of channels is strictly 
larger than that of all quantum compatible pairs. 
For instance, the above theorem shows that  
$\Lambda_{\frac{1}{\sqrt{2}}}$ and 
$\Lambda_{\frac{1}{\sqrt{2}}}$ are 
min-tensor-compatible while 
they are quantum incompatible as 
$\eta_1 = \eta_2 = \frac{1}{\sqrt{2}}$ 
does not satisfy (\ref{eq:qoptimal}).
\subsection{min-tensor-compatibility vs almost quantum compatibility}
We introduced the notion of min-tensor-compatibility. 
Theorem \ref{theorem1} showed that 
there are pairs of min-tensor-compatible channels which are not 
quantum compatible. 
Joint maps of such pairs are not completely positive. 
\par
Let aside the incompatibility issue for the moment, we consider a map 
$\Phi: \mathcal{L}(\mathbb{C}^2 \otimes \mathbb{C}^2) 
\to \mathcal{L}(\mathbb{C}^2)$ defined by 
\begin{eqnarray*}
\Phi(X) = \langle \varphi_{-}
 |(id \otimes T)(X)|\varphi_-\rangle \id, 
\end{eqnarray*}
where $|\varphi_-\rangle$ is a singlet state defined by 
$|\varphi_-\rangle = \frac{1}{\sqrt{2}}
(|01\rangle - |10\rangle)$  
and $T$ is a transpose map for $z$-basis.
For $A\geq 0$ and $B\geq 0$, 
$\Phi(A\otimes B) = \langle \varphi_- |
A\otimes T(B)|\varphi_-\rangle \id\geq 0$ holds 
as $T(B) \geq 0$. 
On the other hand,  
for $X= |\varphi_+\rangle \langle \varphi_+ |$ 
with $|\varphi_+ \rangle:= \frac{1}{\sqrt{2}}
(|00\rangle + |11\rangle)$, we obtain 
$\Phi(X)=-\id/2 <0$. 
Thus $\Phi$ is not completely 
positive but is well-defined as a map 
from $\mathcal{L}_+(\mathbb{C}^2)\otimes_{min}
\mathcal{L}_+(\mathbb{C}^2)$ to $\mathcal{L}_+
(\mathbb{C}^2)$. 
The map is unphysical as it is not completely positive. 
On the other hand, in the context of compatibility 
we are not interested in effects like 
$X= |\varphi_+\rangle \langle \varphi_+|$.
Instead, quantities $P(A, B| \rho):= \mbox{tr}[\rho \Phi(A \otimes B)]$ 
for all $A, B \in \mathcal{L}_+(\mathbb{C}^2)$ and $\rho \in \mathcal{S}(\mathbb{C}^2)$ 
are all that we concern about in order to check 
if the map $\Phi$ is a joint map of some channels. 
Although $\Phi$ is not completely positive, 
the set of data $\{P(A,B|\rho)|\ 
0 \leq A, B \leq \id\}$
is physically realizable by the following 
procedure. First we map a state $\rho\in \mathcal{S}(\mathbb{C}^2)$ to 
a state $\Psi^*(\rho)$ on the composite system by a channel $\Psi$ defined 
by $\Psi(\cdot ) := \langle \varphi_- | \cdot |\varphi_-\rangle \id$. 
($\Psi^*(\rho) = |\varphi_-\rangle \langle \varphi_-|$ holds for all $\rho$. )
Then we measure an effect $A\otimes T(B) \geq 0$ 
and read its outcome probability as that for $A \otimes B$. 
More concretely, 
an apparatus measuring $T(B)$ is constructed by 
changing its reference frame from $\{\mathbf{e}_x, \mathbf{e}_y, 
\mathbf{e}_z\}$ to $\{\mathbf{e}_x, - \mathbf{e}_y, \mathbf{e}_z\}$.  
 Thus we may introduce yet another class of compatibility. 
 \begin{definition}
 Consider a pair of channels 
 $\Lambda_1:\mathcal{L}(\hi_1) \to \mathcal{L}(\hi_0)$ and 
 $\Lambda_2: \mathcal{L}(\hi_2) \to \mathcal{L}(\hi_0)$. 
 The pair is called {\bf almost quantum compatible} if and only if 
 it is min-tensor-compatible and there exists a joint map $\Lambda$ 
 written in a form 
 \begin{eqnarray*}
 \Lambda = \Psi\circ (\Theta_1 \otimes \Theta_2), 
 \end{eqnarray*}
 where $\Psi: \mathcal{L}(\hi_1 \otimes \hi_2)\to 
 \mathcal{L}(\hi_0)$ is a channel, and 
 $\Theta_1: \mathcal{L}(\hi_1) \to \mathcal{L}(\hi_1)$ 
 and $\Theta_1:\mathcal{L}(\hi_2) \to \mathcal{L}(\hi_2)$ 
 are unit-preserving positive maps.  
 \end{definition}
By definition almost quantum compatible pairs of channels are 
min-tensor-compatible. 
We show the converse is not true. 
There are min-tensor-compatible pairs that are not almost 
quantum compatible. 
\begin{theorem}
Let $\Lambda_{\eta}$ $(0 \leq \eta \leq 1)$ be a noisy identity channel on a qubit. 
$\Lambda_{\eta_1}$ and $\Lambda_{\eta_2}$ are almost quantum compatible 
if and only if 
\begin{eqnarray}
\eta_1^2 + \eta_2^2 + (1-\eta_1- \eta_2)^2 \leq 1.  \label{eqaqc}
\end{eqnarray}
\end{theorem}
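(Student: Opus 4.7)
The sufficiency direction ($\Leftarrow$) is immediate: if $(\eta_1,\eta_2)$ satisfies (\ref{eqaqc}), then $\Lambda_{\eta_1}$ and $\Lambda_{\eta_2}$ are quantum compatible by Eq.~(1), and any quantum joint channel $\Psi$ is itself an almost quantum joint with $\Theta_1=\Theta_2=\mathrm{id}$. The substantive content lies in the converse.

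For the converse, given an almost quantum decomposition $\Lambda=\Psi\circ(\Theta_1\otimes\Theta_2)$, my first step is to normalise: conjugating each $\Theta_i$ by $\Theta_i(\id)^{-1/2}$ and absorbing the change into $\Psi$ reduces to the case where each $\Theta_i$ is positive unital on $\mathcal{L}(\mathbb{C}^2)$ and $\Psi$ is a CP unital channel. The marginals $\Psi_1(X):=\Psi(X\otimes\id)$ and $\Psi_2(Y):=\Psi(\id\otimes Y)$ are then CP unital qubit channels, quantum compatible via $\Psi$, satisfying $\Psi_i\circ\Theta_i=\Lambda_{\eta_i}$. Writing $\Theta_i(\sigma_j)=a^{(i)}_j\id+\sum_k B^{(i)}_{kj}\sigma_k$ and $\Psi_i(\sigma_j)=s^{(i)}_j\id+\sum_k N^{(i)}_{kj}\sigma_k$ in the Heisenberg--Bloch form, the factorisation $\Psi_i\circ\Theta_i=\Lambda_{\eta_i}$ forces the matrix identities $N^{(i)}B^{(i)}=\eta_i I_3$ and $\vec a^{(i)}+(B^{(i)})^T\vec s^{(i)}=0$; in particular $B^{(i)}$ is invertible whenever $\eta_i>0$.

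I would next reduce to the $SU(2)$-covariant case. The only $SU(2)$-covariant positive unital maps on $\mathcal{L}(\mathbb{C}^2)$ are the noisy identities $\Lambda_\lambda$ with $\lambda\in[-1,1]$, while the CP unital ones correspond to $\lambda\in[-1/3,1]$. In the covariant situation $\Theta_i=\Lambda_{\lambda_i}$ and $\Psi_i=\Lambda_{\mu_i}$, so the matrix identity becomes $\mu_i\lambda_i=\eta_i$, and quantum compatibility of $(\Psi_1,\Psi_2)$ reads $\mu_1^2+\mu_2^2+(1-\mu_1-\mu_2)^2\le 1$ by Eq.~(1). Since $\eta_i=\mu_i\lambda_i\le\mu_i$ for $\lambda_i\in(0,1]$ and the region defined by~(\ref{eqaqc}) is monotonically down-closed on $[0,1]^2$, feasibility of $(\mu_1,\mu_2)$ immediately forces $(\eta_1,\eta_2)$ itself to satisfy~(\ref{eqaqc}); conversely taking $\lambda_i=1$ shows any QC pair is admissible.

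The main obstacle is justifying the reduction to the $SU(2)$-covariant case. Because the definition requires $\Theta_i$ to act on $\mathcal{L}(\hi_i)$ without an ancilla, the class of almost quantum joint maps is not literally closed under the convex combinations produced by Haar averaging over $SU(2)$-conjugation. My plan for overcoming this is to apply St{\o}rmer's decomposability theorem on $M_2$, writing each positive unital $\Theta_i$ as $p_i\Theta_i^{+}+(1-p_i)\Theta_i^{-}\circ T$ with $\Theta_i^{\pm}$ CP unital and $T$ the transpose, and thereby splitting $\Lambda$ into four CP-type summands indexed by whether a transpose acts on each factor. Averaging each summand over $SU(2)$-conjugation produces covariant pieces, to which the Bloch identity $N^{(i)}B^{(i)}=\eta_i I_3$ and the Werner-state characterisation of $SU(2)$-covariant CP joints on two qubits apply, bounding $(\eta_1,\eta_2)$ in every case by the quantum compatibility region~(\ref{eqaqc}) and completing the proof.
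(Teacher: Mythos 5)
Your sufficiency direction, the normalisation of the $\Theta_i$, and the Bloch-matrix factorisation $N^{(i)}B^{(i)}=\eta_i I_3$ (with $\|B^{(i)}\|\le 1$ from positivity and unitality of $\Theta_i$) all match what is needed and are correct. The gap is exactly where you flag it: the reduction to the $SU(2)$-covariant case, and the repair you propose does not work. Twirling does not commute with composition. The Haar average $\int dU\,\mathrm{Ad}_U^{*}\circ\Psi_i\circ\Theta_i\circ\mathrm{Ad}_U$ reproduces $\Lambda_{\eta_i}$ but is \emph{not} the composition of the separately twirled factors; to render $\Psi_i$ and $\Theta_i$ individually covariant you would have to insert an independent average $\int dV\,\mathrm{Ad}_V\circ\mathrm{Ad}_V^{*}$ between them, and integrating over $V$ independently collapses each factor to the completely depolarizing map. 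Pre-decomposing $\Theta_i$ via St{\o}rmer into CP and co-CP parts does nothing to fix this, since the obstruction is the composition, not complete positivity. Consequently the only case you actually analyse (all maps covariant, $\eta_i=\mu_i\lambda_i\le\mu_i$, down-closedness of the region) does not imply the general statement; what remains to be proved is that compatible unital qubit channels $\Psi_1,\Psi_2$ whose Bloch blocks have all singular values at least $\eta_1$, resp.\ $\eta_2$, force $(\eta_1,\eta_2)$ into the region (\ref{eqaqc}) --- and that is essentially the whole theorem. (Even the covariant case is incomplete as written: Eq.~(1) is only quoted for $\eta\in[0,1]$, while covariant CP unital $\Psi_i$ allow $\mu_i\in[-1/3,1]$.)

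For comparison, the paper avoids any covariance reduction. It feeds a singlet through $id\otimes\Psi$ to obtain a tripartite state $\rho=(id\otimes\Psi)^{*}(\kb{\varphi_-}{\varphi_-})$, chooses directions adapted to the kernels of the $\id$-components of $\Theta_1,\Theta_2$, and combines two lemmas: an anticommuting-operator bound $\sum_n\tr{\rho E_n}^2\le 1$ applied to a triple of mutually anticommuting tripartite observables, and a classical monogamy inequality $p_{23}\ge p_{12}+p_{13}-1$ to lower-bound the third correlator by $\eta_1+\eta_2-1$. This yields $\eta_1^2/x_1^2+\eta_2^2/z_2^2+(\eta_1+\eta_2-1)^2\le 1$ with $|x_1|,|z_2|\le 1$ directly, for arbitrary (non-covariant) $\Psi$ and $\Theta_i$. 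If you want to salvage your route, you would need a substitute for the covariance reduction along these lines rather than a twirling argument.
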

\begin{proof}
If part is trivial as $\Lambda_{\eta_1}$ and 
$\Lambda_{\eta_2}$ satisfying inequality (\ref{eqaqc}) are quantum compatible. 
\\
Let us consider the only if part. 
Suppose that 
$\Lambda_{\eta_1}$ and $\Lambda_{\eta_2}$ are almost quantum compatible.  
We assume $\eta_1+ \eta_2 \geq 1$ 
and show that  $\eta_1$ and $\eta_2$ satisfy 
(\ref{eqaqc}). 
There exists a channel $\Psi: \mathcal{L}(\mathbb{C}^2 
\otimes \mathbb{C}^2) \to \mathcal{L}(\mathbb{C}^2)$ and
unit-preserving
positive maps $\Theta_1, \Theta_2: \mathcal{L}(
\mathbb{C}^2) \to \mathcal{L}(\mathbb{C}^2)$ such that 
$\Lambda:= \Psi\circ (\Theta_1 \otimes \Theta_2)$ is a joint map 
of $\Lambda_{\eta_1}$ and $\Lambda_{\eta_2}$. 
We introduce an auxiliary qubit $\hik=\mathbb{C}^2$ and consider a singlet state 
$|\varphi_-\rangle 
\langle \varphi_- |
 \in \mathcal{S}(\hik \otimes \hi_0)= \mathcal{S}(
\mathbb{C}^2\otimes \mathbb{C}^2)$. 
While $(id\otimes \Lambda)^*(|\varphi_-\rangle 
\langle \varphi_-|)$ may not be a quantum state, 
$\rho:=(id \otimes \Psi)^*(|\varphi_- \rangle 
\langle \varphi_- |)$ is a quantum state on a composite system 
$\hik\otimes \hi_1 \otimes \hi_2=
\mathbb{C}^2 \otimes \mathbb{C}^2 \otimes \mathbb{C}^2$. 
(See \cite{Erkka} for a related technique.)
We examine tripartite correlation functions with respect to the state $\rho$. 
We study $\Theta_1$ 
to choose appropriate effects in considering the tripartite correlations. 
The map can be written with coefficients $v_i, A_{ij} \in \mathbb{R}^3$ $(i, j=1,2,3)$ 
as 
\begin{eqnarray*}
\Theta_1(\id)=\id,\quad 
\Theta_1(\sigma_i) = v_i \id + \sum_{j=1}^3 A_{ij} \sigma_j. 
\end{eqnarray*}
Thus $\mathbf{a}=[a_1, a_2, a_3] \in \mathbb{R}^3$ 
satisfying $\mathbf{a}\cdot \mathbf{v}=0$ with  
$\mathbf{v}=[v_1, v_2, v_3]$ gives 
\begin{eqnarray*}
\Theta_1(\mathbf{a}\cdot \boldsymbol\sigma) = \sum_{i,j=1}^3  
a_i A_{ij} \sigma_j. 
\end{eqnarray*}
That is, $\Theta_1(\mathbf{a}\cdot \boldsymbol\sigma)$
 has no $\id$ component. 
We introduce a two-dimensional subspace 
$V_1:=\{\mathbf{a}\in \mathbb{R}^3 |\ 
\mathbf{a} \cdot \mathbf{v}=0\}$. 
The above argument shows that
$\Theta_1 (\mathbf{a}\cdot \boldsymbol\sigma)$ 
 for 
any $\mathbf{a} \in V_1$ does not have 
$\id$ component. 
($V_1$ is $\mathbb{R}^3$ in case $\mathbf{v}=0$. The following argument holds 
also in this case.) 
By repeating the argument for $\Theta_2$, 
one can conclude that there is a two-dimensional subspace 
$V_2 \subset \mathbb{R}^3$ such that 
$\Theta_2(\mathbf{a}\cdot \boldsymbol\sigma)$ 
has no $\id$ coefficient for $\mathbf{a}\in V_2$. 
We denote by $V=V_1 \cap V_2$, which satisfies 
$\dim V\geq 1$.
Let us choose normalized vectors 
$\mathbf{e}_2 \in V$ and $\mathbf{e}_1 \in V_1$ 
so that $\mathbf{e}_2\cdot \mathbf{e}_1 =0$. 
We write $\Theta_1(\mathbf{e}_1\cdot \boldsymbol\sigma)$
as
\begin{eqnarray*}
\Theta_1(\mathbf{e}_1\cdot \boldsymbol\sigma) &=& x_1 \mathbf{p}\cdot \boldsymbol\sigma, 
\end{eqnarray*}
where $\mathbf{p}$ is a normalized vector and 
$x_1 \in \mathbb{R}$. 
As $\Theta_1$ is a 
unit-preserving positive map, $\Theta_1(\id \pm \mathbf{e}_1
\cdot \boldsymbol\sigma)
= \id \pm x_1\mathbf{p}\cdot \boldsymbol\sigma\geq 0$ holds, 
and hence $|x_1|\leq 1$ follows.  
$\Theta_1(\mathbf{e}_2 \cdot \boldsymbol\sigma)$ 
is written as 
$\Theta_1(\mathbf{e}_2 \cdot \boldsymbol\sigma)
=y_1 \mathbf{p}'\cdot \boldsymbol\sigma$, 
where $|y_1|\leq 1$ and $\mathbf{p}'$ is a 
normalized vector. 
We expand $\mathbf{p}'$ by introducing 
a normalized vector $\mathbf{q}$ satisfying 
$\mathbf{p}\cdot \mathbf{q}=0$ to obtain 
\begin{eqnarray*}
\Theta_1(\mathbf{e}_2\cdot \boldsymbol\sigma ) &=& y_1 (\sin \theta_1 \mathbf{p} + \cos \theta_1 \mathbf{q} )\cdot 
\boldsymbol\sigma,  
\end{eqnarray*}
where $\theta_1 \in \mathbb{R}$. 
We choose the direction of $\mathbf{q}$ 
so that $y_1 \cos \theta_1 \leq 0$. 
Now we set a normalized vector $\mathbf{e}_3$ 
which is orthogonal to both $\mathbf{e}_1$ 
and $\mathbf{e}_2$. 
$\Theta_2(\mathbf{e}_3\cdot 
\boldsymbol\sigma)$ is written as 
\begin{eqnarray*}
\Theta_2(\mathbf{e}_3\cdot \boldsymbol\sigma)&=&z'_2 \id + z_2 \mathbf{r}\cdot \boldsymbol\sigma, 
\end{eqnarray*}
where $z'_2, z_2 \in \mathbb{R}$ and 
$\mathbf{r}$ is a normalized vector. 
We choose the direction of $\mathbf{e}_3$ 
so that $z'_2\geq 0$. 
As $\Theta_2(\id \pm \mathbf{e}_3\cdot 
\boldsymbol\sigma)
\geq 0$ holds, 
we obtain $1-z'_2 \geq |z_2|$ and thus 
$0\leq z'_2 \leq 1$ and $|z_2|\leq 1$. 
Finally, we represent 
$\Theta_2(\mathbf{e}_2 \cdot 
\boldsymbol\sigma)$ as
\begin{eqnarray*}
\Theta_2(\mathbf{e}_2\cdot \boldsymbol\sigma)&=& y_2(\sin \theta_2 \mathbf{r} + \cos \theta_2 \mathbf{s})\cdot \boldsymbol\sigma,  
\end{eqnarray*}
where 
$\mathbf{s}$ is a normalized vector satisfying 
$\mathbf{r}\cdot \mathbf{s}=0$. 
The direction of $\mathbf{s}$ 
is chosen so that $y_2 \cos \theta_2 \leq 0$. 
$|y_2|\leq 1$ follows by considering 
$\Theta_2(\id \pm \mathbf{e}_2\cdot 
\boldsymbol\sigma) \geq 0$.
Note that none of $x_1, y_1, z_2$ and $y_2$ 
can be vanishing. For instance, if $x_1=0$, 
any channel $\Psi$ gives 
$\Psi\circ(\Theta_1 \otimes \Theta_2)
(\mathbf{e}_1 \cdot \boldsymbol\sigma\otimes \id)
=0$ which does not coincide with 
$\Lambda_{\eta_1}(\mathbf{e}_1 \cdot 
\boldsymbol\sigma)$. Similar argument holds for 
$y_1, z_2$ and $y_2$.   
\par
We need the following lemma on Clifford algebra.
\begin{lemma}\label{lemmaClifford}
Let $\{E_n\}_{n=1}^N$ be self-adjoint operators satisfying 
for any $n,m$,  
\begin{eqnarray*}
&&E_n E_m + E_m E_n=2 \delta_{nm} \id. 
\end{eqnarray*}
For $\mathbf{x}\in \mathbb{R}^N$, 
we introduce $E(\mathbf{x}):=\sum_{n=1}^N x_n E_n$.  
For any state $\rho$, it holds that 
\begin{eqnarray*}
| \mbox{tr}[\rho E(\mathbf{x})]| \leq |\mathbf{x}|. 
\end{eqnarray*}
In addition, it holds that 
\begin{eqnarray*}
\sum_n \mbox{tr}[\rho E_n]^2 \leq 1. 
\end{eqnarray*}
\end{lemma}
The proof will be presented in Appendix \ref{app:A}. 
\par
We apply the Lemma to an anticommuting set
$\{\mathbf{e}_1\cdot \boldsymbol\sigma \otimes \mathbf{p}\cdot \boldsymbol\sigma 
\otimes \id,
\mathbf{e}_3 \cdot \boldsymbol\sigma \otimes 
\id \otimes \mathbf{r}\cdot \boldsymbol\sigma,
\id \otimes 
\mathbf{q}\cdot \boldsymbol\sigma \otimes \mathbf{s}\cdot \boldsymbol\sigma \}$
to obtain
\begin{eqnarray}
\mbox{tr}[\rho( \mathbf{e}_1\cdot \boldsymbol\sigma \otimes \mathbf{p}\cdot \boldsymbol\sigma 
\otimes \id )]^2
+ \mbox{tr}[\rho(\mathbf{e}_3 \cdot \boldsymbol\sigma \otimes 
\id \otimes \mathbf{r}\cdot \boldsymbol\sigma )]^2
+ \mbox{tr}[\rho(\id \otimes 
\mathbf{q}\cdot \boldsymbol\sigma \otimes \mathbf{s}\cdot \boldsymbol\sigma )]^2 \leq 1.
\label{ineq1} 
\end{eqnarray}
The first term of the left-hand side of the above inequality is calculated as
\begin{eqnarray}
\mbox{tr}[\rho( \mathbf{e}_1\cdot \boldsymbol\sigma\otimes \mathbf{p}\cdot \boldsymbol\sigma 
\otimes \id )]^2
&=& \frac{1}{x_1^2} \mbox{tr}[\rho(\mathbf{e}_1 \cdot \boldsymbol\sigma\otimes
\Theta_1(\mathbf{e}_1\cdot 
\boldsymbol\sigma) \otimes \id  )]^2
\nonumber 
\\
&=&\frac{1}{x_1^2}
\langle \varphi_- | \mathbf{e}_1 \cdot \boldsymbol\sigma \otimes \Psi (\Theta_1 ( \mathbf{e}_1 \cdot \boldsymbol\sigma)
\otimes \Theta_2(\id))  |\varphi_-\rangle^2 
\nonumber \\
&=& \frac{1}{x_1^2} 
\langle \varphi_- |
\mathbf{e}_1\cdot \boldsymbol\sigma \otimes 
\Lambda_{\eta_1} (\mathbf{e}_1 \cdot \boldsymbol\sigma) |\varphi_- \rangle^2
= \frac{\eta_1^2}{x_1^2}.
\label{eq1st}
\end{eqnarray}
The second term of the left-hand side of (\ref{ineq1}) is 
\begin{eqnarray}
\mbox{tr}[\rho(\mathbf{e}_3 \cdot \boldsymbol\sigma 
\otimes \id \otimes \mathbf{r}\cdot \boldsymbol\sigma ) ]^2
&=& \frac{1}{z_2^2}\mbox{tr}[\rho(\mathbf{e}_3 \cdot 
\boldsymbol\sigma 
\otimes \id \otimes 
\Theta_2(\mathbf{e}_3\cdot \boldsymbol\sigma - 
z'_2 \id 
))]^2
\nonumber \\
& =& 
\frac{1}{z_2^2} \langle \varphi_- | \mathbf{e}_3 \cdot 
\boldsymbol\sigma \otimes
\Lambda_{\eta_2}
(\mathbf{e}_3\cdot \boldsymbol\sigma - 
z'_2 \id 
)
|\varphi_- \rangle^2 =\frac{\eta_2^2}{z_2^2}.
\label{eq2nd} 
\end{eqnarray} 
Now we employ the following lemma to evaluate the third term of 
(\ref{ineq1}).  
\begin{lemma}\label{lemmacorrelation}
Let $\{p(a,b,c)\}_{a,b,c\in \{0,1\}}$ be a probability distribution. 
We introduce the following quantities:
\begin{eqnarray*}
p_{12}:= \sum_{abc} p(a,b,c) (-1)^{a\oplus b}\\
p_{23}:= \sum_{abc} p(a,b,c) (-1)^{b \oplus c}\\
p_{13}:= \sum_{abc} p(a,b,c)(-1)^{a\oplus c}. 
\end{eqnarray*}
They satisfy 
\begin{eqnarray*}
p_{23} \geq p_{12}+p_{13} -1. 
\end{eqnarray*}
\end{lemma}
The proof is found in Appendix \ref{app:B}. 
This lemma claims that the second and the third random variables are 
strongly correlated if the correlation between the first and the second 
and that between the first and the third are strong. 
We examine 
\begin{eqnarray*}
p(a,b,c):= \mbox{tr}\left[\rho 
\left(\frac{\id +(-1)^a \mathbf{e}_2\cdot \boldsymbol\sigma}{2}
\otimes \frac{\id + (-1)^b \mathbf{q}\cdot \boldsymbol\sigma }{2}
\otimes \frac{\id + (-1)^c \mathbf{s}\cdot \boldsymbol\sigma} {2}
\right)\right].
\end{eqnarray*}
For this probability distribution, one can calculate to obtain 
\begin{eqnarray*}
p_{12}=\mbox{tr}[\rho
(\mathbf{e}_2 \cdot \boldsymbol\sigma \otimes \mathbf{q}\cdot \boldsymbol\sigma \otimes \id)]
=\langle \varphi_-| 
\left(\mathbf{e}_2 \cdot \boldsymbol\sigma \otimes 
\Psi( \mathbf{q}\cdot \boldsymbol\sigma \otimes \id 
\right)|\varphi_- \rangle
\end{eqnarray*}
As it holds that 
\begin{eqnarray*}
\mathbf{q}\cdot \boldsymbol\sigma=\frac{1}{y_1 \cos \theta_1}
\Theta_1(\mathbf{e}_2 \cdot \boldsymbol\sigma) 
- \tan \theta_1 \mathbf{p}\cdot \boldsymbol\sigma 
= \frac{1}{y_1 \cos \theta_1}
\Theta_1(\mathbf{e}_2 \cdot \boldsymbol\sigma) 
- \frac{\tan \theta_1}{x_1} \Theta_1(\mathbf{e}_1 \cdot 
\boldsymbol\sigma ),  
\end{eqnarray*}
we obtain 
\begin{eqnarray*}
p_{12}
&=&
\langle \varphi_-| 
\left(\mathbf{e}_2 \cdot \boldsymbol\sigma \otimes 
\Psi( 
\Theta_1 \left(
\frac{\mathbf{e}_2 \cdot \boldsymbol\sigma}{y_1 \cos \theta_1}
- \frac{\tan \theta_1 \mathbf{e}_1 \cdot \boldsymbol\sigma}{
x_1}
\right)
 \otimes \Theta_2(\id) 
\right)|\varphi_- \rangle
\\
&=& \langle \varphi_-| 
\left(\mathbf{e}_2 \cdot \boldsymbol\sigma \otimes 
\Lambda_{\eta_1}
\left(
\frac{\mathbf{e}_2 \cdot \boldsymbol\sigma}{y_1 \cos \theta_1}
- \frac{\tan \theta_1 \mathbf{e}_1 \cdot \boldsymbol\sigma}{
x_1}
\right)\right)
|\varphi_- \rangle
=
- \frac{\eta_1}{y_1 \cos \theta_1}. 
\end{eqnarray*}
Similarly, as 
\begin{eqnarray*}
\mathbf{s}\cdot \boldsymbol\sigma 
&=&\frac{1}{y_2 \cos \theta_2}
\Theta_2(\mathbf{e}_2 \cdot \boldsymbol\sigma) - 
\tan \theta_2 \mathbf{r}\cdot \boldsymbol\sigma
\\
&=& \frac{1}{y_2 \cos \theta_2}
\Theta_2(\mathbf{e}_2 \cdot \boldsymbol\sigma) - 
\frac{\tan \theta_2}{z_2}
 \Theta_2 (\mathbf{e}_3 \cdot \boldsymbol\sigma -z'_2 \id)
\end{eqnarray*}
holds, we obtain 
\begin{eqnarray*}
p_{13}= -\frac{\eta_2}{y_2 \cos \theta_2}. 
\end{eqnarray*}
The condition 
$0 \geq y_1 \cos \theta_1, y_2 \cos \theta_2 \geq -1$ 
gives $p_{12} \geq \eta_1$ and $p_{13}\geq \eta_2$. 
Applying Lemma \ref{lemmacorrelation}, 
we obtain 
\begin{eqnarray}
p_{23}
= \mbox{tr}[\rho (\id \otimes \mathbf{q}\cdot 
\boldsymbol\sigma \otimes \mathbf{s}
\cdot \boldsymbol\sigma)]
 \geq \eta_1 + \eta_2 -1.
\label{eqp23} 
\end{eqnarray} 
Combining (\ref{eqp23}), (\ref{eq1st}), (\ref{eq2nd})
 and (\ref{ineq1}), we find 
\begin{eqnarray*}
\frac{\eta_1^2}{x_1^2}+ \frac{\eta_2^2}{z_2^2} 
+ (\eta_1 + \eta_2 -1)^2\leq 1.
\end{eqnarray*}
As $|x_1|, |z_2|\leq 1$, we obtain the wanted inequality. 
\end{proof}
\section{Discussions}
In the expanded concept of compatibility, a pair of maps is termed compatible 
if a joint map exists. The presence of this joint map might vary depending 
on how we define a composite system. This paper addresses this issue through 
an examination of composite systems involving two qubits. 
Min-tensor-compatibility 
refers to compatibility within the context of the min-tensor product of effect spaces. 
We established the necessary and sufficient condition for a pair of noisy identity 
channels on qubits, denoted as $\Lambda_{\eta_1}$ and $\Lambda_{\eta_2}$, to be min-tensor-compatible. 
These condition reveals the existence of min-tensor-compatible pairs of noisy identity 
channels that do not satisfy the criteria for quantum compatibility. 
This underscores that compatibility is contingent upon the approach to composite systems. 
This is distinct from the Tsirelson bound for the CHSH inequality, which yields 
an identical value for both normal quantum and min-tensored composite systems.
Additionally, we introduced a novel concept, almost quantum compatibility, 
driven by operational perspectives. We derived the necessary and sufficient condition
 for a pair of noisy identity channels on a qubit to be almost quantum compatible, 
which aligns with those for quantum compatibility. 
In proving the condition, we investigated tripartite correlation functions of quantum states.
 While any bipartite correlations for quantum effects can be realized by quantum states\cite{bipartite}, 
there exist tripartite correlations that cannot be replicated by any quantum states\cite{tripartite}. 
We introduced a new criterion in our proof to ascertain whether 
two given bipartite correlations can be extended to a tripartite quantum correlation.
\par 
There are still avenues for further exploration. Regarding the set of channel pairs, 
we demonstrated the inclusion: 
\begin{eqnarray*}
\mbox{quantum compatible} 
\subseteq \mbox{almost quantum compatible}
\subsetneq \mbox{min-tensor-compatible
}.
\end{eqnarray*}
Whether the sets of all quantum compatible pairs and all almost quantum compatible pairs 
coincide remains unknown. Additionally, from a device-independent perspective, 
we may introduce another class of compatibility allowing a channel 
$\Psi$ 
to have a domain as 
 $\mathcal{L}(\mathbb{C}^{d_1} 
\otimes \mathbb{C}^{d_2})$
and positive maps $\Theta_j$ $(j=1,2)$
to have codomains $\mathcal{L}(
\mathbb{C}^{d_j})$
with 
$d_1, d_2 \geq 2$.
Finally, the overarching goal is to develop a general theory unrestricted to qubits.
\begin{acknowledgments}
TM acknowledges financial support from JSPS (KAKENHI Grant No. 
JP20K03732).
\end{acknowledgments}

\begin{appendix}
\section{Proof of Lemma \ref{lemmaClifford}}\label{app:A}
\begin{proof}
Note that $E(\mathbf{x})^2 
= |\mathbf{x}|^2 \id$ holds. 
Thus we obtain 
\begin{eqnarray*}
|\mbox{tr}[\rho E(\mathbf{x})]|^2 
\leq \mbox{tr}[\rho E(\mathbf{x})^2]
= |\mathbf{x}|^2. 
\end{eqnarray*}
For any $\mathbf{x}\in \mathbb{R}^N$ 
with $|\mathbf{x}|= 1$ it holds that  
\begin{eqnarray*}
\sum_n x_n \mbox{tr}[\rho E_n] \leq 1. 
\end{eqnarray*}
Taking $x_n=\frac{\mbox{tr}[\rho E_n]}{\sqrt{\sum_m \mbox{tr}[\rho E_m]^2}}$, 
we obtain 
\begin{eqnarray*}
\sum_n \mbox{tr}[\rho E_n]^2 \leq 1.  
\end{eqnarray*}
\end{proof}
\section{Proof of Lemma \ref{lemmacorrelation}}\label{app:B}
\begin{proof}
As $\sum_{c,a \oplus b=0}p(a,b,c)+ \sum_{c,a \oplus b=1}p(a,b,c)=1$, 
\begin{eqnarray*}
p_{12}= 
2 \sum_{c,a\oplus b=0}p(a,b,c) -1 
=2( p(0,0,0)+p(0,0,1) +p(1,1,0)+p(1,1,1))-1 
\end{eqnarray*}
holds. Similarly, we have 
\begin{eqnarray*}
p_{13}= 2 \sum_{b, a\oplus c=0}p(a,b,c)-1 
= 2( p(0,0,0) + p(0,1,0) +p(1,0,1)+p(1,1,1))-1. 
\end{eqnarray*}
Therefore we obtain 
\begin{eqnarray*}
1 + \frac{p_{12}+p_{13}}{2}& =& 2 (p(0,0,0) + p(1,1,1)) 
+ p(1,1,0) +p(0,0,1) + p(0,1,0) + p(1,0,1)
\\ 
&=& 1+ p(0,0,0) +p(1,1,1) - p(0,1,1) - p(1,0,0) \leq 1+ p(0,0,0)+p(1,1,1). 
\end{eqnarray*}

Thus 
\begin{eqnarray*}
p_{23}=2 ( p(0,0,0)+ p(1,0,0) +p(0,1,1)+p(1,1,1))-1 
\geq 2 (p(0,0,0) +p(1,1,1)) -1 
\geq p_{12}+p_{13} -1. 
\end{eqnarray*}

\end{proof}
\end{appendix}
\end{document}